\documentclass[conference]{IEEEtran}

\usepackage{amsmath,amssymb,amsthm,bm,mathtools}
\usepackage{graphicx}
\usepackage{cite}
\usepackage[bookmarks=false]{hyperref}
\usepackage{float}
\usepackage{xcolor}
\usepackage{microtype}
\usepackage{eso-pic}
\AddToShipoutPictureBG*{%
  \AtPageLowerLeft{\put(0,18){\hspace*{0.75in}\parbox{\dimexpr\paperwidth-1.5in}{\centering\scriptsize
  This work has been accepted for presentation at the 2026 IEEE Global Communications Conference (GLOBECOM).\\[2pt]
  \copyright~2026 IEEE. Personal use of this material is permitted. Permission from IEEE must be obtained for all other uses, in any current or future media, including reprinting/republishing this material for advertising or promotional purposes, creating new collective works, for resale or redistribution to servers or lists, or reuse of any copyrighted component of this work in other works.}}}}

\hypersetup{hidelinks}
\allowdisplaybreaks

\newtheorem{theorem}{Theorem}

\newtheorem{proposition}{Proposition}
\newtheorem{corollary}{Corollary}

\newtheorem{approximation}{Approximation}
\newtheorem{remark}{Remark}

\newcommand{\tr}{\mathrm{tr}}

\IEEEoverridecommandlockouts

\begin{document}
\bstctlcite{IEEEexample:BSTcontrol}

\title{Feasibility of Common-Stream-Only Sensing
in ZF-RSMA with Imperfect CSIT}

\author{
\IEEEauthorblockN{Surya Harsha and Shankar Prakriya}
\IEEEauthorblockA{Department of Electrical Engineering, Indian Institute of Technology Delhi, New Delhi, India\\
Email: eez248398@iitd.ac.in, shankar@ee.iitd.ac.in}
}

\maketitle

\begin{abstract}
Reusing the Rate-Splitting Multiple Access (RSMA) common stream for sensing eliminates dedicated probing signals in Integrated Sensing and Communication (ISAC), but creates a feasibility bottleneck: the common-power fraction must simultaneously support common decoding, private decoding, and target detection. This paper investigates this tradeoff in downlink zero-forcing RSMA with imperfect channel state information at the transmitter (CSIT) and imperfect successive interference cancellation (SIC). With the common beam fixed over a sensing dwell through sensing geometry, the common-power fraction becomes the sole design variable. We prove that the communication-feasible set is either a closed interval or empty. Analytical outage expressions localize these endpoints, revealing a high-signal-to-noise ratio (SNR) floor-ceiling effect: common decoding dictates a common-power floor, while imperfect CSIT and SIC impose a private-decoding ceiling. For sensing, the conditional detection constraint is affine, preserving the interval structure. A geometry-matched common beam ensures monotonic detection improvement, reducing joint-feasibility check to the communication-feasible upper endpoint. Simulations validate our analysis, identifying regimes where common-stream-only sensing is viable.
\end{abstract}

\begin{IEEEkeywords}
Rate-Splitting Multiple Access (RSMA), Integrated Sensing and Communication (ISAC), Dual-Functional Radar-Communication (DFRC), Imperfect CSIT, Imperfect SIC
\end{IEEEkeywords}

\section{Introduction}

Integrated Sensing and Communication (ISAC) is a key 6G capability in which spatial, spectral, and power resources are shared across data transmission and environmental sensing \cite{Clerckx2024ProcIEEE}. Rate-Splitting Multiple Access (RSMA) is a natural fit for ISAC because it manages interference via common/private message decomposition. As the common stream is decoded by all users and beamformed using slow-timescale channel information \cite{Clerckx2023JSAC}, it can serve as a dual-purpose sensing waveform, eliminating the need for dedicated probing signals \cite{Xu2021JSTSP,Yin2022ComLetters}. 

However, recent experiments expose a practical common-decoding bottleneck under finite-modulation-and-coding scheme (MCS) operation \cite{Lyu2026TWC}. In this common-stream-only sensing regime, the common-power fraction $\alpha \in [0,1]$ becomes the sole fast-timescale operating variable: increasing $\alpha$ strengthens common decoding and target detection, but invariably reduces private-stream power and, under hardware impairments, amplifies residual interference. The key question is, therefore feasibility: does any $\alpha$ simultaneously satisfy both communication and sensing requirements?

This question connects to a wider RSMA-ISAC literature that has primarily studied joint precoder design and rate-sensing tradeoffs under varying degrees of channel state information at the transmitter (CSIT) \cite{Xu2021JSTSP,CernaLoli2022arXiv}. While recent works address non-ideal operations, such as imperfect CSIT and successive interference cancellation (SIC) ~\cite{Liu2026TWC, Cheng2025Phycom}, they focus on optimizing selected designs rather than characterizing the admissible common-power set itself. Because the scalar $\alpha$ dictates conflicting system dynamics, the issue is not merely identifying the optimal operating point, but determining whether a viable one exists at all and, what structural limits affect it.

This feasibility problem becomes strictly bounded under imperfect CSIT and SIC, precisely the non-ideal conditions RSMA is designed to handle \cite{Dai2016TWC}. To study this in a reduced-order yet meaningful regime, we consider finite-dimensional downlink zero-forcing (ZF) RSMA with common-stream-only sensing under these non-ideal conditions. By fixing the common beam based on slow-timescale sensing geometry, we isolate the fast-timescale design to the scalar $\alpha$.

\noindent Our main contributions are:
\begin{enumerate}
    \item \textbf{Communication Interval Law (Theorem~\ref{thm:com_interval}):} The communication-feasible set for $\alpha$ is a closed interval or empty, proven independently of later approximations.
    \item \textbf{Finite-SNR Boundaries and High-SNR Frontier (Proposition~\ref{prop:analytical_outages}, Corollary~\ref{cor:high_snr_frontier}):} Outage expressions localize finite-SNR boundaries; their high-SNR limits reveal a nonzero common-decoding floor on $\alpha$ and a private-stream ceiling persisting as transmit power $P \to \infty$.
    \item \textbf{Conditional Joint Feasibility (Theorem~\ref{thm:joint_interval}, Corollary~\ref{cor:matched_geometry}):} Any fixed beam retains a joint-feasible interval. Furthermore, a geometry-matched common beam ensures monotonic detection, reducing the feasibility check to the upper communication-feasible endpoint. 
\end{enumerate}

Monte Carlo simulations validate the analysis and distinguish resolvable finite-SNR infeasibility from fundamental high-SNR structural limits.

\section{System Model}
\label{sec:system_model}

\begin{figure}[t]
    \centering
    \includegraphics[width=\columnwidth]{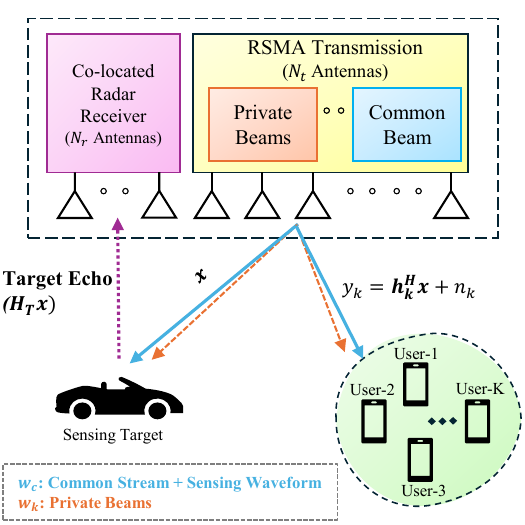}
    \caption{Downlink ZF-RSMA ISAC system model.}
    \label{fig:sys_model}
\end{figure}

We consider a downlink RSMA--ISAC system in which a base station (BS) with
\(N_t\) transmit antennas serves \(K(\leq N_t)\) single-antenna users
while illuminating one sensing target, as shown in Fig.~\ref{fig:sys_model}. The analysis is for one sensing dwell after a coarse sensing stage has provided a nominal target cell or geometry. Over this dwell, the common beam is fixed, the private ZF layer is updated from imperfect CSIT, and \(\alpha\in[0,1]\) denotes the common-power fraction.

\subsection{Communication Model}

Let $\mathbf{h}_k\in\mathbb{C}^{N_t\times 1}$ denote the channel from
the BS to user $k$. Under the Gauss--Markov imperfect-CSIT model
\cite{Dai2016TWC},
\begin{equation}
\mathbf{h}_k
=
\sqrt{1-\tau^2}\,\hat{\mathbf{h}}_k
+
\tau\mathbf{e}_k,
\label{eq:gm_model}
\end{equation}
where $\hat{\mathbf{h}}_k,\mathbf{e}_k\sim
\mathcal{CN}(\mathbf{0},\sigma_h^2\mathbf{I}_{N_t})$ are mutually
independent and independent across users, and $\tau\in[0,1]$ quantifies
the CSIT imperfection.

Let $s_c$ and $\{s_k\}_{k=1}^K$ be mutually independent unit-power
symbols, where $s_c$ is reused as the sensing waveform and no dedicated probing signal is transmitted. With total transmit power $P$, the private-stream power is $P_p=(1-\alpha)P/K$. The common beam
$\mathbf{w}_c\in\mathbb{C}^{N_t\times 1}$ is unit norm and fixed over
the sensing dwell. For the private layer, define
$\hat{\mathbf{H}}\triangleq
[\hat{\mathbf{h}}_1,\ldots,\hat{\mathbf{h}}_K]
\in\mathbb{C}^{N_t\times K}$. The BS observes $\hat{\mathbf{H}}$, not the true channel matrix, and constructs the
unit-norm ZF private beam directions, so that $P_p$ remains the per-stream power,
from $\hat{\mathbf{H}}$ as in~\cite{Wiesel2008TSP}
\begin{equation}
\widetilde{\mathbf{W}}
\triangleq
[\widetilde{\mathbf{w}}_1,\ldots,\widetilde{\mathbf{w}}_K]
=
\hat{\mathbf{H}}(\hat{\mathbf{H}}^H\hat{\mathbf{H}})^{-1},
\
\mathbf{w}_k=
\frac{\widetilde{\mathbf{w}}_k}
{\|\widetilde{\mathbf{w}}_k\|},
\label{eq:zf_precoder}
\end{equation}
for $k=1,\ldots,K$, assuming $\hat{\mathbf{H}}$ has full column rank.
Let $\mathbf{W} \triangleq [\mathbf{w}_1,\ldots,\mathbf{w}_K]$; then $\hat{\mathbf{h}}_i^H\mathbf{w}_k=0$ for $i\neq k$, but the true-channel cross terms are $\mathbf{h}_i^H\mathbf{w}_k=\tau\mathbf{e}_i^H\mathbf{w}_k$ for $i\neq k$; hence imperfect CSIT converts nominal ZF nulls into residual multiuser leakage. The transmitted signal is

\vspace{-3mm}
\begin{equation}
\mathbf{x}
=
\sqrt{\alpha P}\,\mathbf{w}_c s_c
+
\sum_{k=1}^{K}\sqrt{P_p}\,\mathbf{w}_k s_k,
\label{eq:tx_signal}
\end{equation}

\noindent and the received signal at user $k$ is
$y_k
=
\mathbf{h}_k^H\mathbf{x}+n_k,
n_k\sim\mathcal{CN}(0,\sigma_n^2).$
Each user first decodes the common stream, performs SIC, and then
decodes its private stream \cite{Clerckx2023JSAC}. The signal-to-interference-plus-noise ratios (SINRs) below assume the standard RSMA receiver-side knowledge for decoding and SIC.
Imperfect SIC is modeled by a residual factor $\delta\in[0,1]$, where
$\delta=0$ corresponds to ideal SIC. The resulting common-stream and
private-stream SINRs at user $k$ are
\begin{equation}
\gamma_{c,k}
=
\frac{\alpha P |\mathbf{h}_k^H\mathbf{w}_c|^2}
{\sum_{j=1}^{K} P_p |\mathbf{h}_k^H\mathbf{w}_j|^2+\sigma_n^2},
\label{eq:sinr_common}
\end{equation}
and
\vspace{-2mm}
\begin{equation}
\gamma_{p,k}
=
\frac{P_p|\mathbf{h}_k^H\mathbf{w}_k|^2}
{\sum_{\substack{j=1\\ j\neq k}}^{K}
P_p|\mathbf{h}_k^H\mathbf{w}_j|^2
+
\delta\alpha P|\mathbf{h}_k^H\mathbf{w}_c|^2
+
\sigma_n^2}.
\label{eq:sinr_private}
\end{equation}

\subsection{Sensing-Detection Model}

For sensing, we adopt a single-target co-located Multiple-Input Multiple-Output (MIMO) radar model over
$L$ snapshots. The radar receiver has $N_r$ antennas, and the target
response matrix is $\mathbf{H}_{\mathrm{T}}\in\mathbb{C}^{N_r\times N_t}$. Clutter and receiver noise are modeled through a positive-definite covariance $\mathbf{R}\in\mathbb{C}^{N_r\times N_r}$, assumed known or pre-estimated over the dwell; the clutter-free/noise-only model is modelled by $\mathbf{R}=\sigma_r^2\mathbf{I}_{N_r}$.
Define
$\mathbf{Q}
\triangleq
\mathbf{H}_{\mathrm{T}}^H
\mathbf{R}^{-1}
\mathbf{H}_{\mathrm{T}},
\mathbf{Q}\succeq\mathbf{0}.$
The nominal dwell-level sensing geometry is therefore encoded by $\mathbf{H}_{\mathrm{T}}$ and $\mathbf{R}$.
Since the common stream is the only RSMA layer intentionally shaped for
sensing, the slow-timescale common beam is chosen as
\vspace{-1.5mm}
\begin{equation}
\mathbf{w}_c
\in
\operatorname*{arg\,max}_{\|\mathbf{u}\|=1}
\mathbf{u}^H\mathbf{Q}\mathbf{u},
\label{eq:wc_geometry_design}
\vspace{-0.5mm}
\end{equation}
i.e., as a dominant eigenvector of $\mathbf{Q}$. This beam is fixed over
the fast user-fading ensemble, whereas the private beams in
\eqref{eq:zf_precoder} vary with the imperfect CSIT.

Condition on a realized private beam set
$\mathbf{W}$. Define the aggregate spatial covariance
$\mathbf{B}_{\mathbf{W}}
\triangleq
\frac{1}{K}
\sum_{k=1}^{K}
\mathbf{w}_k\mathbf{w}_k^H .$
the transmit covariance is
\begin{equation}
\mathbf{R}_x(\alpha;\mathbf{W})
=
P\left(
\alpha\mathbf{w}_c\mathbf{w}_c^H
+
(1-\alpha)\mathbf{B}_{\mathbf{W}}
\right).
\label{eq:tx_cov}
\end{equation}
For a radar processor that knows $\mathbf{R}$, the transmitted common waveform, and $\mathbf{R}_x(\alpha;\mathbf{W})$, the large-snapshot General Likelihood Ratio Test (GLRT) model for single-target co-located MIMO radar
with known clutter covariance \cite{LiStoica2009Book} gives the
conditional noncentrality parameter
\begin{equation}
\Lambda_{\mathbf{W}}(\alpha)
=
2L\,\operatorname{tr}
\left(
\mathbf{Q}\mathbf{R}_x(\alpha;\mathbf{W})
\right).
\label{eq:lambda_trace}
\end{equation}
Define
$q_c
\triangleq
\mathbf{w}_c^H\mathbf{Q}\mathbf{w}_c,
q_p(\mathbf{W})
\triangleq
\operatorname{tr}
\left(
\mathbf{Q}\mathbf{B}_{\mathbf{W}}
\right),
\Delta_{\mathbf{W}}
\triangleq
q_c-q_p(\mathbf{W}).$
Using the linearity of the trace and the cyclic identity
$\operatorname{tr}(\mathbf{Q}\mathbf{u}\mathbf{u}^{H})
= \operatorname{tr}(\mathbf{u}^{H}\mathbf{Q}\mathbf{u})
= \mathbf{u}^{H}\mathbf{Q}\mathbf{u} (\forall \mathbf{u})$, substituting \eqref{eq:tx_cov} into \eqref{eq:lambda_trace} gives
\vspace{-3mm}
\begin{equation}
\Lambda_{\mathbf{W}}(\alpha)
=
2LP
\left[
q_p(\mathbf{W})
+
\alpha\Delta_{\mathbf{W}}
\right].
\label{eq:lambda_affine}
\end{equation}
For a detection threshold $\eta>0$ chosen to satisfy the desired
false-alarm constraint, the corresponding conditional detection
probability is
\vspace{-3mm}
\begin{equation}
\bar{P}_{\mathrm{D},\mathbf{W}}(\alpha)
\triangleq
Q_d
\left(
\sqrt{\Lambda_{\mathbf{W}}(\alpha)},
\sqrt{\eta}
\right),
\label{eq:pd_conditional}
\end{equation}
where $Q_d(\cdot,\cdot)$ is the generalized Marcum $Q$-function of order
$d$, with $d$ determined by the degrees of freedom of the GLRT statistic
\cite{Shnidman1989TIT,LiStoica2009Book}.

\section{Communication-Feasible Interval}
\label{sec:com_feasible}

With the slow-timescale common beam $\mathbf{w}_c$ and transmit power $P$
fixed, the common-power fraction $\alpha\in[0,1]$ is the only remaining
operating variable. Let the SINR thresholds be $T_c=2^{R_c}-1$ and $T_p=2^{R_p}-1$, and define the
representative per-user layer outage probabilities
$
p_{c,u}(\alpha)=\Pr\{\gamma_{c,k}(\alpha)<T_c\},\quad
p_{p,u}(\alpha)=\Pr\{\gamma_{p,k}(\alpha)<T_p\}.
$
Under the statistically symmetric user model, the multicast common-stream
constraint is enforced through $\epsilon_c=\min\{\epsilon_c^{\mathrm{u}},\epsilon_c^{\mathrm{sys}}/K\}$ (the per-user and system-level outage targets), since the union bound gives
$p_{c,\mathrm{sys}}(\alpha)\le Kp_{c,u}(\alpha)$. Hence, for a given $P$, the
communication-feasible common-power set is
\begin{equation}
    \mathcal{A}_{\mathrm{comm}}(P)=
\left\{\alpha\in[0,1]:
p_{c,u}(\alpha)\le\epsilon_c,\;
p_{p,u}(\alpha)\le\epsilon_p^{\mathrm{u}}
\right\}.
\label{eq:Acomm_def}
\end{equation}

This layer-wise formulation follows the RSMA decoding order.
It also gives a conservative end-to-end guarantee. For
\[
\mathcal{O}_k^{\mathrm{e2e}}(\alpha)
=
\{\gamma_{c,k}(\alpha)<T_c\}
\cup
\{\gamma_{c,k}(\alpha)\ge T_c,\gamma_{p,k}(\alpha)<T_p\},
\]
we have $p_k^{\mathrm{e2e}}(\alpha)
\le p_{c,u}(\alpha)+p_{p,u}(\alpha).$
Thus any $\alpha\in\mathcal{A}_{\mathrm{comm}}(P)$ ensures
$p_k^{\mathrm{e2e}}(\alpha)\le\epsilon_k^{\mathrm{e2e}}$ whenever
$\epsilon_c+\epsilon_p^{\mathrm{u}}\le\epsilon_k^{\mathrm{e2e}}$.

We keep the two layer constraints rather than replacing them with a single
end-to-end outage constraint because the feasible geometry in $\alpha$ is
two-sided. For each fading realization, common decoding imposes a lower
threshold on $\alpha$, while private decoding imposes an upper threshold:
$
\{\gamma_{c,k}(\alpha)\ge T_c,\gamma_{p,k}(\alpha)\ge T_p\}
=
\{\alpha_c(\omega)\le\alpha\le\alpha_p(\omega)\},
$
with an empty success interval if $\alpha_c(\omega)>\alpha_p(\omega)$.
Averaging over fading therefore yields the coverage probability of random
intervals on the $\alpha$-axis, which need not be monotone or quasi-concave.
The set $\mathcal{A}_{\mathrm{comm}}(P)$ preserves this RSMA structure,
captures the opposing effects of $\alpha$, and motivates the interval law
derived next.

\begin{theorem}[Communication interval law]
\label{thm:com_interval}
For every fixed $P$ and fixed unit-norm common beam $\mathbf{w}_c$, the
communication-feasible set $\mathcal{A}_{\mathrm{comm}}(P)$ is either a
closed interval $[\alpha_{\min}(P),\alpha_{\max}(P)]\subseteq[0,1]$ or the
empty set.
\end{theorem}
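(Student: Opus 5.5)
The plan is to show that each of the two layer constraints in \eqref{eq:Acomm_def} defines a closed, one-sided subset of $[0,1]$. Specifically, $\{p_{c,u}(\alpha)\le\epsilon_c\}$ should be of the form $[a,1]$ (or empty), and $\{p_{p,u}(\alpha)\le\epsilon_p^{\mathrm{u}}\}$ should be of the form $[0,b]$ (or empty). The intersection of two such sets is then either empty or the closed interval $[a,b]$, with $\alpha_{\min}(P)=a$ and $\alpha_{\max}(P)=b$. The argument is pathwise: fix a fading realization $\omega$ (true and estimated channels, hence $\mathbf{W}$) and study $\gamma_{c,k}$ and $\gamma_{p,k}$ as functions of $\alpha$ alone, with $P_p=(1-\alpha)P/K$ substituted. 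Since $\mathbf{w}_c$ is fixed and $\mathbf{W}$ does not depend on $\alpha$, all channel gains are $\alpha$-independent.

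\textbf{Step 1 (pathwise monotonicity).} Write $a_k=P|\mathbf{h}_k^H\mathbf{w}_c|^2$, $b_k=\frac{P}{K}\sum_j|\mathbf{h}_k^H\mathbf{w}_j|^2$, $c_k=\frac{P}{K}|\mathbf{h}_k^H\mathbf{w}_k|^2$ and $d_k=\frac{P}{K}\sum_{j\ne k}|\mathbf{h}_k^H\mathbf{w}_j|^2$. Then
\[
\gamma_{c,k}=\frac{\alpha a_k}{(1-\alpha)b_k+\sigma_n^2},\qquad
\gamma_{p,k}=\frac{(1-\alpha)c_k}{(1-\alpha)d_k+\delta\alpha a_k+\sigma_n^2}.
\]
Because $\sigma_n^2>0$, the first is continuous and nondecreasing in $\alpha$: its numerator is nondecreasing and its positive denominator is nonincreasing. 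For the second, divide numerator and denominator by $1-\alpha$ on $[0,1)$ to get $c_k/\bigl(d_k+(\delta\alpha a_k+\sigma_n^2)/(1-\alpha)\bigr)$. This is nonincreasing, because $(\delta\alpha a_k+\sigma_n^2)/(1-\alpha)$ is nondecreasing, and it is continuous up to $\alpha=1$, where it equals $0$.

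\textbf{Step 2 (monotone, semicontinuous outage functions).} Monotonicity gives event inclusions: for $\alpha\le\alpha'$, $\{\gamma_{c,k}(\alpha')<T_c\}\subseteq\{\gamma_{c,k}(\alpha)<T_c\}$. Hence $p_{c,u}$ is nonincreasing, and likewise $p_{p,u}$ is nondecreasing. For closedness we need the sublevel sets to contain their endpoints. This comes from semicontinuity, using the strict inequality in the outage event. Take $\alpha_n\downarrow\alpha$. By continuity and monotonicity, $\{\gamma_{c,k}(\alpha_n)<T_c\}$ increases to $\{\gamma_{c,k}(\alpha)<T_c\}$: any $\omega$ with $\gamma_{c,k}(\alpha)<T_c$ satisfies $\gamma_{c,k}(\alpha_n)<T_c$ for large $n$. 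Continuity of measure then gives $p_{c,u}(\alpha_n)\uparrow p_{c,u}(\alpha)$, so $p_{c,u}$ is right-continuous. Combined with being nonincreasing, $\{\alpha:p_{c,u}(\alpha)\le\epsilon_c\}=[a,1]$ with $a=\inf$ of the set attained, or the set is empty. Symmetrically, taking $\alpha_n\uparrow\alpha$ shows $p_{p,u}$ is left-continuous and nondecreasing, so its sublevel set is $[0,b]$ or empty.

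\textbf{Step 3 (intersection).} Intersecting $[a,1]$ and $[0,b]$ gives $[a,b]$ when $a\le b$ and $\emptyset$ otherwise, which is the claim. This conclusion holds for any fixed unit-norm $\mathbf{w}_c$, any $\tau$, and any $\delta$. It uses no approximation, only the exact SINR forms \eqref{eq:sinr_common}--\eqref{eq:sinr_private} and $\sigma_n^2>0$.

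The main obstacle is Step 2's endpoint attainment, i.e. closedness rather than mere interval-ness. Monotonicity alone only yields an interval, possibly half-open. The fix relies on outage being defined with strict inequality "$<$", which makes the correct one-sided continuity work out. The degenerate cases also need checking: at $\alpha=1$, $\gamma_{p,k}=0$, which is consistent with the upper set $[0,b]$; and realizations where a gain is zero are harmless, since monotonicity is weak.
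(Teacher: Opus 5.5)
Your proof is correct. It follows the paper's skeleton: pathwise monotonicity of $\gamma_{c,k}$ and $\gamma_{p,k}$ in $\alpha$ (you argue it algebraically, the paper by direct differentiation), then monotone outage functions, then a closed upper set intersected with a closed lower set.

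The genuine difference is in the closedness step. The paper proves full continuity of $p_{c,u}$ and $p_{p,u}$. It argues that the outage indicators converge almost surely except on the threshold-equality events $\{\gamma_{c,k}(\alpha)=T_c\}$ and $\{\gamma_{p,k}(\alpha)=T_p\}$, which are null by absolute continuity of the channel law, and then applies dominated convergence. You instead prove only the one-sided continuity the claim needs: right-continuity of the nonincreasing $p_{c,u}$ and left-continuity of the nondecreasing $p_{p,u}$. You get these from monotone convergence of events, exploiting the strict inequality in the outage definition.

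Your route is more elementary and more general. It places no assumption on the fading law, so atoms or discrete CSIT-error models would be fine. It also needs no check that SINR level sets are null, including at the degenerate points $\alpha\in\{0,1\}$.

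What the paper's stronger conclusion buys is a characterization of the endpoints. Whenever $\alpha_{\min}(P)>0$, the endpoint solves $p_{c,u}(\alpha)=\epsilon_c$, and whenever $\alpha_{\max}(P)<1$, it solves $p_{p,u}(\alpha)=\epsilon_p^{\mathrm{u}}$. That is the natural justification for locating the endpoints by solving outage equations, as Section~\ref{sec:outage} does. With only one-sided continuity, an endpoint could in general sit at a jump of the outage function, where the constraint holds strictly rather than with equality.
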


\begin{IEEEproof}
Fix \(P\) and \(\mathbf w_c\). For each
user \(k\), define
$a_k=P|\mathbf h_k^{\rm H}\mathbf w_c|^2,
b_k=\frac{P}{K}\sum_{j=1}^{K}|\mathbf h_k^{\rm H}\mathbf w_j|^2,
d_k=\frac{P}{K}|\mathbf h_k^{\rm H}\mathbf w_k|^2,
e_k=b_k-d_k .$
Then
\[
\gamma_{c,k}(\alpha)=
\frac{\alpha a_k}{(1-\alpha)b_k+\sigma_n^2},
\gamma_{p,k}(\alpha)=
\frac{(1-\alpha)d_k}{(1-\alpha)e_k+\alpha\delta a_k+\sigma_n^2},
\]
and direct differentiation gives $
\frac{d\gamma_{c,k}}{d\alpha}
\ge0,\
\frac{d\gamma_{p,k}}{d\alpha}\le0.$
Thus, for every channel realization, the common SINRs are nondecreasing in
\(\alpha\), while the private SINRs are nonincreasing. Hence the corresponding
common and private outage probabilities \(p_{c,u}(\alpha)\) and
\(p_{p,u}(\alpha)\) are respectively nonincreasing and nondecreasing.

Moreover, \(\gamma_{c,k}(\alpha)\) and \(\gamma_{p,k}(\alpha)\) are continuous
in \(\alpha\). For any \(\alpha_n\to\alpha\), the relevant outage indicators
therefore converge almost surely, except possibly on the threshold-equality
events \(\{\gamma_{c,k}(\alpha)=T_c\}\) and
\(\{\gamma_{p,k}(\alpha)=T_p\}\), which have probability zero by the absolute
continuity of the channel law. Dominated convergence then yields continuity of
\(p_{c,u}\) and \(p_{p,u}\) on \([0,1]\).

Consequently, the common constraint is a closed upper subset of \([0,1]\), and
the private constraint is a closed lower subset of \([0,1]\). Their intersection
is therefore either empty or a closed interval
\([\alpha_{\min}(P),\alpha_{\max}(P)]\subseteq[0,1]\).
\end{IEEEproof}

\begin{remark}[Structural generality]
\label{rem:com_generality}
Theorem~\ref{thm:com_interval} is a sample-path result and is independent of the particular fixed unit-norm common beamformer $\mathbf{w}_c$, including \eqref{eq:wc_geometry_design}. It is also not tied to equal private powers: for any fixed split $P_j=(1-\alpha)P\beta_j$, $\beta_j\ge0$, $\sum_{j=1}^{K}\beta_j=1$, with the equal split recovered by $\beta_j=1/K$, the same monotonicity gives the interval law. User-specific outage targets are also allowed; symmetry is used only to express the bounds through $p_{c,u}(\alpha)$ and $p_{p,u}(\alpha)$.
\end{remark}

\section{Analytical Boundary Localization and High-SNR Frontier}
\label{sec:outage}

To localize $\alpha_{\min}(P)$ and $\alpha_{\max}(P)$, we adopt tractable approximations for $p_{c,u}(\alpha)$ and $p_{p,u}(\alpha)$. As noted in Section~\ref{sec:system_model}, the private-layer interference terms are the CSIT-error-induced ZF leakages. Because $\mathbf{h}_k$ is isotropically distributed and $\|\mathbf{w}_c\|=1$, the distribution of $X_k=|\mathbf{h}_k^H\mathbf{w}_c|^2$ is invariant to the slow-timescale choice of $\mathbf{w}_c$; consequently, the communication-side analytical localization derived below remains valid after common-beam design. Let $M \triangleq N_t-K+1$ and $m \triangleq K-1$ (reflecting the $K-1$ interfering private beams). Due to imperfect CSIT ZF alignment, the common gain $X_k \triangleq |\mathbf{h}_k^H\mathbf{w}_c|^2 \sim \mathrm{Exp}(\sigma_h^2)$, the intended private gain $Y_k \triangleq |\mathbf{h}_k^H\mathbf{w}_k|^2$ follows the Gamma mixture $\sum_{n=1}^{M}\pi_n\Gamma(n,\sigma_h^2)$ with weights $\pi_n \triangleq \binom{M-1}{n-1}(1-\tau^2)^{n-1}(\tau^2)^{M-n}$ ~\cite{Dai2016TWC,Moschopoulos1985AISM}, and, since $\mathbf{h}_k^H\mathbf{w}_j=\tau\mathbf{e}_k^H\mathbf{w}_j$ for $j\neq k$, the individual ZF leakages satisfy $|\mathbf{h}_k^H\mathbf{w}_j|^2 \sim \mathrm{Exp}(\tau^2\sigma_h^2).$ 

The aggregate leakage
$Z_k=\tau^2\mathbf{e}_k^H\mathbf{A}_k\mathbf{e}_k$, where
$\mathbf{A}_k\triangleq \sum_{j\neq k}\mathbf{w}_j\mathbf{w}_j^H \succeq \mathbf{0}$,
is a positive Gaussian quadratic form. If $\{\lambda_{k,r}\}$ are the nonzero eigenvalues of $\mathbf{A}_k$, then conditioned on $\mathbf{W}$,
$Z_k \overset{d}{=} \tau^2\sigma_h^2\sum_r \lambda_{k,r}E_r$, $E_r\sim \mathrm{Exp}(1)$, so $\mathcal{L}_{Z_k|\mathbf{W}}(s)=\prod_r(1+s\tau^2\sigma_h^2\lambda_{k,r})^{-1}$~\cite{AlNaffouri2016TCOM,Moschopoulos1985AISM}. To retain compact $\alpha$-dependent outage expressions, we use the following first-moment-preserving Gamma approximation.

\begin{approximation}[Aggregate leakage approximation]
\label{approx:leakage}
$Z_k^{\mathrm{ana}}\sim \Gamma(m,\tau^2\sigma_h^2)$, i.e., a trace-preserving equal-eigenvalue approximation satisfying $\mathbb{E}[Z_k\mid \mathbf{W}]=\tau^2\sigma_h^2\operatorname{tr}(\mathbf{A}_k)=m\tau^2\sigma_h^2$. This approximation is exact when the interference beams $\{\mathbf{w}_j\}_{j\neq k}$ are orthonormal, so that $\mathbf{A}_k$ is a rank-$m$ projector, and is used only to localize the outage boundaries, not Theorem~\ref{thm:com_interval}.
\end{approximation}

\noindent To evaluate outage under this approximation, we first parameterize the normalized SINR thresholds and interference powers. Let 
$s_c(\alpha) \triangleq \frac{T_c}{\alpha P \sigma_h^2}$ 
and 
$s_p(\alpha) \triangleq \frac{T_p K}{(1-\alpha)P\sigma_h^2}$ 
be the scaled common and private thresholds. Then, define the effective leakage power as $\theta_L(\alpha)\triangleq {(1-\alpha)P\tau^2\sigma_h^2}/{K}$ and the residual SIC power as $\theta_{\mathrm{res}}(\alpha)\triangleq \delta\alpha P\sigma_h^2$.

\begin{proposition}[Finite-SNR analytical success probabilities]
\label{prop:analytical_outages}
Under the above approximation, the per-user common-stream success probability is
\begin{align}
P_{c,\mathrm{suc},u}^{\mathrm{ana}}(\alpha)
&= \exp\!\big(-s_c(\alpha)\sigma_n^2\big)\left(1+\frac{T_c(1-\alpha)\tau^2}{K\alpha}\right)^{M-K}\nonumber\\
&\quad\times \left(1+\frac{T_c(1-\alpha)}{K\alpha}\right)^{-M},
\label{eq:com_succ}
\end{align}
and the per-user private-stream success probability is
\begin{align}
P_{p,\mathrm{suc},u}^{\mathrm{ana}}(\alpha)
&= \sum_{n=1}^{M}\pi_n e^{-s_p(\alpha)\sigma_n^2}\sum_{i=0}^{n-1}\frac{s_p(\alpha)^i}{i!}\nonumber\\
&\quad\times \sum_{\ell=0}^{i}\binom{i}{\ell}(\sigma_n^2)^{i-\ell}\,\mathcal{T}_{\ell}\!\big(s_p(\alpha);\alpha\big),
\label{eq:pri_succ}
\end{align}
where
$
\mathcal{T}_{\ell}(s;\alpha)\triangleq
\sum_{t=0}^{\ell}\frac{\ell!}{t!}(m)_t
\frac{\theta_L(\alpha)^t\theta_{\mathrm{res}}(\alpha)^{\ell-t}}
{\bigl(1+\theta_L(\alpha)s\bigr)^{m+t}\bigl(1+\theta_{\mathrm{res}}(\alpha)s\bigr)^{1+\ell-t}},
$
and $(m)_t \triangleq \Gamma(m+t)/\Gamma(m)$.
\end{proposition}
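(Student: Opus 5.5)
The plan is to reduce both success events to a complementary-CDF condition on one gain and then average over the remaining gains through Laplace transforms. Fix user $k$. Write the per-user gains as $X_k=|\mathbf h_k^H\mathbf w_c|^2$, $Y_k=|\mathbf h_k^H\mathbf w_k|^2$ and $Z_k=\sum_{j\ne k}|\mathbf h_k^H\mathbf w_j|^2$, with the marginals stated before Approximation~\ref{approx:leakage} and $Z_k$ replaced by $Z_k^{\mathrm{ana}}\sim\Gamma(m,\tau^2\sigma_h^2)$. As part of the analytical model, I will treat $X_k$, $Y_k$ and $Z_k^{\mathrm{ana}}$ as mutually independent. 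This is the step I expect to be the main obstacle to justify rigorously, because all three are projections of the same $\mathbf h_k$ and $\mathbf w_c$ is not orthogonal to the private beams. I will state it explicitly as part of the approximation, in the same spirit as the isotropy argument that makes the law of $X_k$ independent of $\mathbf w_c$. Once independence is granted, both expressions follow from direct transform computations.

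\emph{Common stream.} From \eqref{eq:sinr_common}, $\gamma_{c,k}\ge T_c$ is equivalent to
\[
X_k\ \ge\ \frac{T_c}{\alpha P}\Big[\tfrac{(1-\alpha)P}{K}(Y_k+Z_k)+\sigma_n^2\Big].
\]
Since $X_k\sim\mathrm{Exp}(\sigma_h^2)$, conditioning on $(Y_k,Z_k)$ gives the conditional success probability
\[
e^{-s_c(\alpha)\sigma_n^2}\,e^{-c(Y_k+Z_k)/\sigma_h^2},\qquad c\triangleq \frac{T_c(1-\alpha)}{K\alpha}.
\]
Averaging over $Y_k$ and $Z_k$ produces the two Laplace transforms, each evaluated at $c/\sigma_h^2$:
\begin{itemize}
\item For the Gamma mixture $Y_k$, the transform is $\sum_n\pi_n(1+c)^{-n}$. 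Setting $x=(1+c)^{-1}$ and recognizing the binomial weights gives $x\,\big((1-\tau^2)x+\tau^2\big)^{M-1}=(1+c)^{-M}(1+\tau^2c)^{M-1}$.
\item For $Z_k^{\mathrm{ana}}$, the transform is $(1+\tau^2c)^{-(K-1)}$.
\end{itemize}
Multiplying the two factors yields $(1+c)^{-M}(1+\tau^2c)^{M-K}$, which is \eqref{eq:com_succ}.

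\emph{Private stream.} Normalize $Y_k=\sigma_h^2\bar Y$, $Z_k=\tau^2\sigma_h^2 G$ and $X_k=\sigma_h^2 E$, with $G\sim\Gamma(m,1)$ and $E\sim\mathrm{Exp}(1)$. Then \eqref{eq:sinr_private} shows that $\gamma_{p,k}\ge T_p$ is equivalent to
\[
\bar Y\ge s_p(\alpha)\big(\sigma_n^2+J\big),\qquad J\triangleq\theta_L(\alpha)G+\theta_{\mathrm{res}}(\alpha)E.
\]
The evaluation proceeds in four steps.
\begin{enumerate}
\item Condition on the mixture component $n$, so that $\bar Y\sim\Gamma(n,1)$. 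Use the Erlang complementary CDF $e^{-x}\sum_{i=0}^{n-1}x^i/i!$ with $x=s_p(\sigma_n^2+J)$.
\item Expand $(\sigma_n^2+J)^i$ binomially. This reduces each term to computing $\mathbb E[J^\ell e^{-s_pJ}]$.
\item Write $\mathbb E[J^\ell e^{-sJ}]=(-1)^\ell\frac{d^\ell}{ds^\ell}\mathcal L_J(s)$, where by independence $\mathcal L_J(s)=(1+\theta_Ls)^{-m}(1+\theta_{\mathrm{res}}s)^{-1}$.
\item Apply the Leibniz rule to this product, using the two derivative formulas
\[
(-1)^t\tfrac{d^t}{ds^t}(1+\theta_Ls)^{-m}=(m)_t\theta_L^t(1+\theta_Ls)^{-m-t},
\]
\[
(-1)^{r}\tfrac{d^{r}}{ds^{r}}(1+\theta_{\mathrm{res}}s)^{-1}=r!\,\theta_{\mathrm{res}}^{r}(1+\theta_{\mathrm{res}}s)^{-1-r},\qquad r=\ell-t.
\]
The binomial coefficient $\binom{\ell}{t}$ times $(\ell-t)!$ collapses to $\ell!/t!$.
\end{enumerate}
This reproduces $\mathcal T_\ell(s_p(\alpha);\alpha)$ exactly. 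Reassembling the sums with the weights $\pi_n$ gives \eqref{eq:pri_succ}. Apart from the independence step, the remaining work is bookkeeping of the scale factors $s_p$, $\theta_L$ and $\theta_{\mathrm{res}}$.
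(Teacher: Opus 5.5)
Your proposal is correct and follows essentially the same route as the paper's sketch. For the common stream, both use the exponential tail of $X_k$ followed by the Gamma-mixture and Gamma Laplace transforms. For the private stream, both use the Erlang tail expansion with $\mathbb{E}[N_\alpha^i e^{-sN_\alpha}]=(-1)^i\mathcal{L}^{(i)}_{N_\alpha}(s)$. Your binomial expansion of $(\sigma_n^2+J)^i$ followed by the Leibniz rule on $\mathcal{L}_J$ is equivalent bookkeeping to differentiating $e^{-s\sigma_n^2}\mathcal{L}_J(s)$. The mutual independence of $X_k$, $Y_k$ and $Z_k^{\mathrm{ana}}$ that you flag is also what the paper adopts implicitly as part of the analytical model (its independent $U$ and $V$ in $N_\alpha$), so there is no gap relative to the paper's argument.
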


\begin{IEEEproof}[Proof sketch]
For common decoding, conditioning on the private-layer denominator in~\eqref{eq:sinr_common} and applying the exponential tail of $X_k$, the Gamma-mixture transform of $Y_k$, and Approximation~\ref{approx:leakage} yields~\eqref{eq:com_succ}. For private decoding, the interference-plus-noise denominator can be modeled as $N_\alpha \triangleq \theta_L(\alpha)\,U+\theta_{\mathrm{res}}(\alpha)\,V+\sigma_n^2$, where independent random variables $U\sim\Gamma(m,1)$ and $V\sim\mathrm{Exp}(1)$ isolate the normalized leakage and residual SIC. Conditioning on $N_\alpha$ and expanding the Gamma-mixture tail of $Y_k$ produces terms of the form $\mathbb{E}[N_\alpha^i e^{-sN_\alpha}]=(-1)^i\mathcal{L}_{N_\alpha}^{(i)}(s)$. Differentiating the Laplace transform to collect the leakage and residual-SIC powers yields $\mathcal{T}_{\ell}(s;\alpha)$, giving~\eqref{eq:pri_succ}.
\end{IEEEproof}

\noindent The expressions above are useful for locating the endpoints at a specified transmit power, but not reveal if infeasibility is merely noise-limited or persists as $P$ grows. The high-SNR limit is therefore used as a diagnostic frontier: thermal-noise terms vanish, whereas CSIT-error leakage and residual-SIC interference remain signal-scaled. This exposes the limiting common-power floor and private-power ceiling before the sensing constraint is imposed.

\begin{corollary}[High-SNR analytical frontier]
\label{cor:high_snr_frontier}
For every fixed $\alpha\in(0,1)$, as $P\to\infty$,
$\lim_{P\to\infty}P_{c,\mathrm{suc},u}^{\mathrm{ana}}(\alpha)
= \Psi_c^{\mathrm{ana}}(\alpha),
\lim_{P\to\infty}P_{p,\mathrm{suc},u}^{\mathrm{ana}}(\alpha)
= \Psi_p^{\mathrm{ana}}(\alpha), \label{eq:Psi_p_limit}$
where
\begin{equation}
\Psi_c^{\mathrm{ana}}(\alpha)
\triangleq
\left(1+\frac{T_c(1-\alpha)\tau^2}{K\alpha}\right)^{M-K}
\left(1+\frac{T_c(1-\alpha)}{K\alpha}\right)^{-M},
\label{eq:Psi_c}
\end{equation}
and
\begin{align}
\Psi_p^{\mathrm{ana}}(\alpha)
&\triangleq \sum_{n=1}^{M}\pi_n \sum_{\ell=0}^{n-1}\sum_{t=0}^{\ell}\frac{(m)_t}{t!}\nonumber\\
&\quad\times
\frac{(T_p\tau^2)^t\left(\dfrac{T_p\delta K\alpha}{1-\alpha}\right)^{\ell-t}}
{(1+T_p\tau^2)^{m+t}\left(1+\dfrac{T_p\delta K\alpha}{1-\alpha}\right)^{1+\ell-t}}.
\label{eq:Psi_p}
\end{align}
\end{corollary}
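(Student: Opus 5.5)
The plan is to pass to the limit term by term in the two closed forms of Proposition~\ref{prop:analytical_outages}, holding $\alpha\in(0,1)$ fixed. Both expressions are finite sums, with $M$, $n$, $i$, $\ell$, $t$ all bounded independently of $P$, so it is enough to find the limit of each summand.

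\textbf{Common stream.} In \eqref{eq:com_succ} the only factor that depends on $P$ is $\exp(-s_c(\alpha)\sigma_n^2)$. Because $s_c(\alpha)=T_c/(\alpha P\sigma_h^2)\to0$, this factor tends to $1$. The two remaining power factors contain no $P$, so the limit equals $\Psi_c^{\mathrm{ana}}(\alpha)$ in \eqref{eq:Psi_c}.

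\textbf{Private stream.} Here $s_p=s_p(\alpha)=\Theta(1/P)$, while $\theta_L(\alpha)$ and $\theta_{\mathrm{res}}(\alpha)$ are both $\Theta(P)$. The key observation is that the products do not depend on $P$:
\[
\theta_L s_p = T_p\tau^2,\qquad \theta_{\mathrm{res}} s_p=\frac{T_p\delta K\alpha}{1-\alpha}.
\]
Each summand of \eqref{eq:pri_succ} carries the weight $s_p^i(\sigma_n^2)^{i-\ell}$ times $\mathcal{T}_\ell(s_p;\alpha)$. The numerator of $\mathcal{T}_\ell$ contains $\theta_L^t\theta_{\mathrm{res}}^{\ell-t}$, which has total degree $\ell$ in $P$. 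I therefore split $s_p^i=s_p^{\ell}\,s_p^{i-\ell}$ and give $s_p^\ell$ to the numerator. This converts $\theta_L^t\theta_{\mathrm{res}}^{\ell-t}$ into the constants $(T_p\tau^2)^t\big(\tfrac{T_p\delta K\alpha}{1-\alpha}\big)^{\ell-t}$. The denominators $(1+\theta_Ls_p)^{m+t}$ and $(1+\theta_{\mathrm{res}}s_p)^{1+\ell-t}$ are already these same constants. What is left over is the factor $(s_p\sigma_n^2)^{i-\ell}$, together with the outer factor $e^{-s_p\sigma_n^2}\to1$.

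\begin{itemize}
\item If $\ell<i$, the factor $(s_p\sigma_n^2)^{i-\ell}\to0$, so the summand vanishes.
\item If $\ell=i$, the binomial coefficient is $1$ and the noise power is $(\sigma_n^2)^0=1$, so the summand has a finite limit.
\end{itemize}

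Only the diagonal $\ell=i$ survives. Collapsing the sum over $i$ and the sum over $\ell$ into a single index $\ell=0,\dots,n-1$ gives the constant
\[
\frac{1}{\ell!}\cdot\frac{\ell!}{t!}(m)_t\,\frac{(T_p\tau^2)^t(\cdot)^{\ell-t}}{(1+T_p\tau^2)^{m+t}(1+\cdot)^{1+\ell-t}},
\]
where $(\cdot)$ stands for $T_p\delta K\alpha/(1-\alpha)$. The $\ell!$ from $1/i!$ cancels the $\ell!$ inside $\mathcal{T}_\ell$, which leaves $(m)_t/t!$, exactly the summand of $\Psi_p^{\mathrm{ana}}(\alpha)$ in \eqref{eq:Psi_p}. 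The mixture weights $\pi_n$ do not depend on $P$ and carry through unchanged.

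The only step that needs care is this power counting in $P$. One has to check that the exponents match exactly, so that the surviving terms stay bounded and do not blow up. This holds because $i\ge\ell$ throughout. The restriction $\alpha\in(0,1)$ is needed so that $s_c$ and $s_p$ are finite and $T_p\delta K\alpha/(1-\alpha)$ is well defined.
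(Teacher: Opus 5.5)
Your proof is correct and follows the same route as the paper. The paper also fixes $\alpha\in(0,1)$, uses $s_c\sigma_n^2\to0$ and $s_p\sigma_n^2\to0$ together with the $P$-invariant products $s_p\theta_L=T_p\tau^2$ and $s_p\theta_{\mathrm{res}}=T_p\delta K\alpha/(1-\alpha)$, and passes to the limit in the finite sums of Proposition~\ref{prop:analytical_outages}. Your regrouping $s_p^i(\sigma_n^2)^{i-\ell}\theta_L^t\theta_{\mathrm{res}}^{\ell-t}=(s_p\sigma_n^2)^{i-\ell}(s_p\theta_L)^t(s_p\theta_{\mathrm{res}})^{\ell-t}$ kills the $\ell<i$ terms and cancels the factorials on the diagonal $\ell=i$; this is exactly the bookkeeping that the paper's ``substituting these limits'' leaves implicit.
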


\begin{IEEEproof}[Proof sketch]
    For fixed $\alpha\in(0,1)$, $s_c(\alpha)\sigma_n^2\to0$ and $s_p(\alpha)\sigma_n^2\to0$ as $P\to\infty$. Moreover, $s_p(\alpha)\theta_L(\alpha)\to T_p\tau^2$ and $s_p(\alpha)\theta_{\mathrm{res}}(\alpha)\to T_p\delta K\alpha/(1-\alpha)$. Substituting these limits into Proposition~\ref{prop:analytical_outages} gives~\eqref{eq:Psi_c} and~\eqref{eq:Psi_p}.
\end{IEEEproof}

Let $\eta_c\triangleq 1-\epsilon_c$, $\eta_p\triangleq 1-\epsilon_p^u$, and $
p_{c,u}^{\mathrm{ana}}(\alpha)\triangleq 1-P_{c,\mathrm{suc},u}^{\mathrm{ana}}(\alpha),\
p_{p,u}^{\mathrm{ana}}(\alpha)\triangleq 1-P_{p,\mathrm{suc},u}^{\mathrm{ana}}(\alpha).$ Define
$ A_{\mathrm{com}}^{\mathrm{ana}}(P)
\triangleq
\left\{\alpha\in[0,1]:
p_{c,u}^{\mathrm{ana}}(\alpha)\le \epsilon_c,\;
p_{p,u}^{\mathrm{ana}}(\alpha)\le \epsilon_p^u
\right\}.
\label{eq:Acom_ana_set}$
When nonempty, write
$A_{\mathrm{com}}^{\mathrm{ana}}(P)=
\bigl[\alpha_{\min}^{\mathrm{ana}}(P),\alpha_{\max}^{\mathrm{ana}}(P)\bigr].$

\begin{remark}[High-SNR interval structure]
\label{rem:asymptotic_interval}
$\Psi_c^{\mathrm{ana}}(\alpha)$ is strictly increasing on $(0,1)$ because it is strictly decreasing in $(1-\alpha)/\alpha$. Also, $\Psi_p^{\mathrm{ana}}(\alpha)$ is nonincreasing on $(0,1)$, since each term in \eqref{eq:Psi_p} is nonincreasing in $\alpha/(1-\alpha)$, and is strictly decreasing when $\delta>0$. Consequently,
\begin{equation}
A_{\infty}^{\mathrm{ana}}
\triangleq
\left\{\alpha\in(0,1):
\Psi_c^{\mathrm{ana}}(\alpha)\ge \eta_c,\;
\Psi_p^{\mathrm{ana}}(\alpha)\ge \eta_p
\right\}
\label{eq:Ainf_ana_def}
\end{equation}
is either empty or an interval; when nonempty, write
$A_{\infty}^{\mathrm{ana}}=
\bigl[\alpha_{\min,\infty}^{\mathrm{ana}},\alpha_{\max,\infty}^{\mathrm{ana}}\bigr].$
Thus, the analytical high-SNR frontier retains the floor-ceiling geometry of Theorem~\ref{thm:com_interval}: common decoding induces a floor on $\alpha$, while imperfect CSIT and SIC can maintain a private-stream ceiling.
\end{remark}

\section{Conditional Joint Feasibility}
\label{sec:joint_feasible}

Section~\ref{sec:com_feasible} showed that the communication constraints reduce the admissible common-power fractions to a closed interval $\mathcal{A}_{\mathrm{comm}}(P)\subseteq[0,1]$ or the empty set. We now characterize how target detection for a fixed realized private beam set $\mathbf{W}$ further restricts $\alpha$.

Recalling ~\eqref{eq:lambda_affine}, for a target detection probability $P_{\mathrm{D}}^{\star}\in(0,1)$, let
$\widetilde{\lambda}\triangleq\lambda^{\star}/(2LP)$, where
$\lambda^{\star}\triangleq\inf\{\lambda\ge0:Q_d(\sqrt{\lambda},\sqrt{\eta})\ge P_{\mathrm{D}}^{\star}\}$\cite{Shnidman1989TIT,LiStoica2009Book}.
Consequently, the constraint $\bar{P}_{\mathrm{D},\mathbf{W}}(\alpha)\ge P_{\mathrm{D}}^{\star}$
is equivalent to the affine inequality
$q_p(\mathbf{W})+\alpha\Delta_{\mathbf{W}}\ge\widetilde{\lambda}$.

\begin{theorem}[Conditional detection interval and endpoint rule]
\label{thm:joint_interval}
For fixed $\mathbf{W}$, the conditional detection-feasible set
$\mathcal{A}_{\mathrm{det},\mathbf{W}}(P_{\mathrm{D}}^{\star})
\triangleq\{\alpha\in[0,1]:\bar{P}_{\mathrm{D},\mathbf{W}}(\alpha)\ge P_{\mathrm{D}}^{\star}\}$
is either an interval or the empty set. Explicitly,
\begin{equation}
\mathcal{A}_{\mathrm{det},\mathbf{W}}(P_{\mathrm{D}}^{\star})
=
\begin{cases}
[\alpha_{\mathrm{det},\mathbf{W}},1]\cap[0,1], & \Delta_{\mathbf{W}}>0,\\[2pt]
[0,1], & \Delta_{\mathbf{W}}=0,\;q_p(\mathbf{W})\ge\widetilde{\lambda},\\[2pt]
\varnothing, & \Delta_{\mathbf{W}}=0,\;q_p(\mathbf{W})<\widetilde{\lambda},\\[2pt]
[0,\alpha_{\mathrm{det},\mathbf{W}}]\cap[0,1], & \Delta_{\mathbf{W}}<0,
\end{cases}
\label{eq:Adet_cases}
\end{equation}
where $\alpha_{\mathrm{det},\mathbf{W}}
\triangleq
\frac{\widetilde{\lambda}-q_p(\mathbf{W})}{\Delta_{\mathbf{W}}}.$
Consequently,
$\mathcal{A}_{\mathrm{joint},\mathbf{W}}(P_{\mathrm{D}}^{\star})
\triangleq
\mathcal{A}_{\mathrm{comm}}(P)\cap
\mathcal{A}_{\mathrm{det},\mathbf{W}}(P_{\mathrm{D}}^{\star})$
is an interval or the empty set. Moreover, if
$\mathcal{A}_{\mathrm{comm}}(P)=[\alpha_{\min}(P),\alpha_{\max}(P)]\neq\varnothing$, then
\begin{equation}
\operatorname*{arg\,max}_{\alpha\in\mathcal{A}_{\mathrm{comm}}(P)}
\bar{P}_{\mathrm{D},\mathbf{W}}(\alpha)
=
\begin{cases}
\{\alpha_{\max}(P)\}, & \Delta_{\mathbf{W}}>0,\\[2pt]
\mathcal{A}_{\mathrm{comm}}(P), & \Delta_{\mathbf{W}}=0,\\[2pt]
\{\alpha_{\min}(P)\}, & \Delta_{\mathbf{W}}<0.
\end{cases}
\label{eq:endpoint_rule}
\end{equation}
\end{theorem}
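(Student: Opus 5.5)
The plan is to reduce everything to the monotonicity of the Marcum $Q$-function in its noncentrality argument, and then to the elementary geometry of an affine function of $\alpha$ on $[0,1]$.

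\textbf{Step 1: the affine reduction.} I would first justify the equivalence stated just before the theorem. The map $\lambda\mapsto Q_d(\sqrt{\lambda},\sqrt{\eta})$ is continuous and nondecreasing on $[0,\infty)$, since the generalized Marcum $Q$-function is increasing in its noncentrality parameter. Hence its superlevel set $\{\lambda\ge0:Q_d(\sqrt{\lambda},\sqrt{\eta})\ge P_{\mathrm D}^\star\}$ is a closed upper ray $[\lambda^\star,\infty)$. The infimum is attained because of continuity. If the set is empty, set $\lambda^\star=+\infty$, and every detection set below is empty.

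Combining this ray with \eqref{eq:lambda_affine} and $2LP>0$, the constraint $\bar P_{\mathrm D,\mathbf W}(\alpha)\ge P_{\mathrm D}^\star$ becomes
\[
f(\alpha)\triangleq q_p(\mathbf W)+\alpha\Delta_{\mathbf W}\ge\widetilde\lambda .
\]
The step needing care is strictness. If $Q_d$ were flat somewhere, the argmax claim could fail. I would therefore note that $Q_d(\sqrt{\lambda},\sqrt{\eta})$ is strictly increasing in $\lambda$ for $\eta>0$, because the noncentral chi-square family is stochastically strictly increasing in its noncentrality. I would cite \cite{Shnidman1989TIT} for this rather than prove it.

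\textbf{Step 2: the case split for $\mathcal{A}_{\mathrm{det},\mathbf W}$.} I would split on the sign of $\Delta_{\mathbf W}$.
\begin{itemize}
\item If $\Delta_{\mathbf W}>0$, dividing gives $\alpha\ge\alpha_{\mathrm{det},\mathbf W}$. Intersecting with $[0,1]$ yields $[\alpha_{\mathrm{det},\mathbf W},1]\cap[0,1]$.
\item If $\Delta_{\mathbf W}<0$, dividing reverses the inequality and yields $[0,\alpha_{\mathrm{det},\mathbf W}]\cap[0,1]$.
\item If $\Delta_{\mathbf W}=0$, $f$ is constant. The set is then $[0,1]$ or $\varnothing$ according to whether $q_p(\mathbf W)\ge\widetilde\lambda$.
\end{itemize}
In every case the set is an intersection of closed intervals, so it is a closed interval or empty.

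\textbf{Step 3: the joint set.} By Theorem~\ref{thm:com_interval}, $\mathcal A_{\mathrm{comm}}(P)$ is a closed interval or empty. The intersection of two intervals (convex subsets of $\mathbb R$) is an interval or empty, which gives the claim for $\mathcal A_{\mathrm{joint},\mathbf W}$.

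\textbf{Step 4: the endpoint rule.} On $\mathcal A_{\mathrm{comm}}(P)=[\alpha_{\min},\alpha_{\max}]$, the detection probability is $\bar P_{\mathrm D,\mathbf W}=g\circ f$ with $g(\lambda)=Q_d(\sqrt{2LP\lambda},\sqrt\eta)$ strictly increasing.
\begin{itemize}
\item If $\Delta_{\mathbf W}>0$, $f$ is strictly increasing, so $g\circ f$ is too. The maximizer over a compact interval is then uniquely $\alpha_{\max}$.
\item If $\Delta_{\mathbf W}<0$, $g\circ f$ is strictly decreasing, and the maximizer is uniquely $\alpha_{\min}$.
\item If $\Delta_{\mathbf W}=0$, $g\circ f$ is constant, and every point of $\mathcal A_{\mathrm{comm}}(P)$ is a maximizer.
\end{itemize}
The degenerate case $\alpha_{\min}=\alpha_{\max}$ is consistent with all three formulas.

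The only nontrivial ingredient is the strict monotonicity of $Q_d$ in its first argument, which I take from the cited literature. The rest is sign bookkeeping.
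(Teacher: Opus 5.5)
Your proposal is correct and follows the same route as the paper. Monotonicity of $\lambda\mapsto Q_d(\sqrt{\lambda},\sqrt{\eta})$ reduces the detection constraint to the affine inequality $q_p(\mathbf{W})+\alpha\Delta_{\mathbf{W}}\ge\widetilde{\lambda}$, intersecting with the interval from Theorem~\ref{thm:com_interval} keeps an interval, and the endpoint rule follows from monotonicity on $\mathcal{A}_{\mathrm{comm}}(P)$. Your appeal to strict monotonicity of $Q_d$ in the noncentrality is a worthwhile refinement, since the paper's proof only says ``nondecreasing,'' which alone would not give the singleton argmax sets $\{\alpha_{\max}(P)\}$ and $\{\alpha_{\min}(P)\}$.
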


\begin{IEEEproof}
By~\eqref{eq:lambda_affine}, $\Lambda_{\mathbf{W}}(\alpha)$ is affine in $\alpha$ with slope $2LP\Delta_{\mathbf{W}}$. Since the function $Q_d(\sqrt{\lambda},\sqrt{\eta})$ is continuous and nondecreasing for $\lambda\ge0$, the detection constraint reduces to an affine inequality in $\alpha$, which yields~\eqref{eq:Adet_cases}. Theorem~\ref{thm:com_interval} gives $\mathcal{A}_{\mathrm{comm}}(P)$ as an interval or the empty set, so their intersection is likewise an interval or the empty set. The endpoint rule in~\eqref{eq:endpoint_rule} follows from the monotonicity of $\bar{P}_{\mathrm{D},\mathbf{W}}(\alpha)$ over $\mathcal{A}_{\mathrm{comm}}(P)$.
\end{IEEEproof}

\begin{corollary}[Matched-geometry common beam]
\label{cor:matched_geometry}
If $\mathbf{w}_c$ satisfies~\eqref{eq:wc_geometry_design}, then $\Delta_{\mathbf{W}}\ge 0$ for every realized $\mathbf{W}$. Consequently, $\bar{P}_{\mathrm{D},\mathbf{W}}(\alpha)$ is nondecreasing in $\alpha$, the right endpoint $\alpha_{\max}(P)$ is always sensing-optimal over any nonempty $\mathcal{A}_{\mathrm{comm}}(P)$, and the conditional joint-feasibility test reduces to $
\bar{P}_{\mathrm{D},\mathbf{W}}\!\bigl(\alpha_{\max}(P)\bigr)\ge P_{\mathrm{D}}^{\star}.$
In particular, if this fails, then no $\alpha\in\mathcal{A}_{\mathrm{comm}}(P)$ is jointly feasible.
\end{corollary}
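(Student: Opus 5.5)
The argument has three steps. First show that $\Delta_{\mathbf{W}}\ge 0$. Then use Theorem~\ref{thm:joint_interval} to get monotonicity of the detection probability. Finally deduce the endpoint test.

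\textbf{Step 1: $\Delta_{\mathbf{W}}\ge0$.} By~\eqref{eq:wc_geometry_design}, $q_c=\mathbf{w}_c^H\mathbf{Q}\mathbf{w}_c=\lambda_{\max}(\mathbf{Q})$, since $\mathbf{Q}\succeq\mathbf{0}$ is Hermitian and the Rayleigh quotient over unit vectors is maximized at a dominant eigenvector. Next expand $q_p(\mathbf{W})$ by linearity and the cyclic identity already used for~\eqref{eq:lambda_affine}:
\[
q_p(\mathbf{W})=\operatorname{tr}(\mathbf{Q}\mathbf{B}_{\mathbf{W}})=\frac{1}{K}\sum_{k=1}^{K}\mathbf{w}_k^H\mathbf{Q}\mathbf{w}_k .
\]
Each private beam in~\eqref{eq:zf_precoder} is unit norm, so $\mathbf{w}_k^H\mathbf{Q}\mathbf{w}_k\le\lambda_{\max}(\mathbf{Q})=q_c$. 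Averaging over $k$ gives $q_p(\mathbf{W})\le q_c$, that is, $\Delta_{\mathbf{W}}\ge0$. This holds for every realization, because it uses only the unit norms of the $\mathbf{w}_k$ and not the ZF structure or the CSIT quality. Equivalently, $\mathbf{B}_{\mathbf{W}}\succeq\mathbf{0}$ with $\operatorname{tr}(\mathbf{B}_{\mathbf{W}})=1$, and $\operatorname{tr}(\mathbf{Q}\mathbf{B})\le\lambda_{\max}(\mathbf{Q})\operatorname{tr}(\mathbf{B})$.

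\textbf{Step 2: monotonicity.} By~\eqref{eq:lambda_affine}, $\Lambda_{\mathbf{W}}(\alpha)$ is affine in $\alpha$ with slope $2LP\Delta_{\mathbf{W}}\ge0$, so it is nondecreasing. The map $\lambda\mapsto Q_d(\sqrt{\lambda},\sqrt{\eta})$ is nondecreasing on $\lambda\ge0$, as used in the proof of Theorem~\ref{thm:joint_interval}. Composing the two, $\bar{P}_{\mathrm{D},\mathbf{W}}(\alpha)$ in~\eqref{eq:pd_conditional} is nondecreasing on $[0,1]$.

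\textbf{Step 3: endpoint test.} Suppose $\mathcal{A}_{\mathrm{comm}}(P)=[\alpha_{\min}(P),\alpha_{\max}(P)]$ is nonempty, as guaranteed by Theorem~\ref{thm:com_interval}. Monotonicity gives $\bar{P}_{\mathrm{D},\mathbf{W}}(\alpha)\le\bar{P}_{\mathrm{D},\mathbf{W}}(\alpha_{\max}(P))$ for all $\alpha$ in the interval. Hence $\alpha_{\max}(P)$ attains the maximum, which is consistent with~\eqref{eq:endpoint_rule} in both the case $\Delta_{\mathbf{W}}>0$ and the case $\Delta_{\mathbf{W}}=0$.

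It follows that $\mathcal{A}_{\mathrm{joint},\mathbf{W}}\neq\varnothing$ if and only if $\bar{P}_{\mathrm{D},\mathbf{W}}(\alpha_{\max}(P))\ge P_{\mathrm{D}}^{\star}$. In the forward direction, any jointly feasible $\alpha$ satisfies the inequality, and the maximum at $\alpha_{\max}(P)$ is at least as large, so the inequality holds there too. In the reverse direction, $\alpha_{\max}(P)$ is itself a jointly feasible point. The contrapositive of the forward direction gives the final claim: if the test fails, no $\alpha\in\mathcal{A}_{\mathrm{comm}}(P)$ is jointly feasible.

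There is no real obstacle; Step 1 is the only substantive step. The one point to check there is that the inequality $\mathbf{w}_k^H\mathbf{Q}\mathbf{w}_k\le\lambda_{\max}(\mathbf{Q})$ uses the normalization $\|\mathbf{w}_k\|=1$ from~\eqref{eq:zf_precoder}. The unnormalized ZF columns $\widetilde{\mathbf{w}}_k$ would not satisfy it.
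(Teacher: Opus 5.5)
Your proposal is correct and follows the paper's proof: the key step is the same bound $q_p(\mathbf{W})=\tr(\mathbf{Q}\mathbf{B}_{\mathbf{W}})\le\lambda_{\max}(\mathbf{Q})\tr(\mathbf{B}_{\mathbf{W}})=q_c$ using $\tr(\mathbf{B}_{\mathbf{W}})=1$, after which the paper simply cites Theorem~\ref{thm:joint_interval} for the monotonicity and endpoint claims that you spell out in Steps 2--3. One wording fix: Theorem~\ref{thm:com_interval} does not guarantee that $\mathcal{A}_{\mathrm{comm}}(P)$ is nonempty, only that it is a closed interval when it is nonempty.
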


\begin{IEEEproof}
Under~\eqref{eq:wc_geometry_design}, $q_c=\lambda_{\max}(\mathbf{Q})$. Also, $\mathbf{B}_{\mathbf{W}}\succeq\mathbf{0}$ and
$\tr(\mathbf{B}_{\mathbf{W}})
=
\frac{1}{K}\sum_{k=1}^{K}\tr(\mathbf{w}_k\mathbf{w}_k^{H})
=
1,$ so
$
q_p(\mathbf{W})
=
\tr(\mathbf{Q}\mathbf{B}_{\mathbf{W}})
\le
\lambda_{\max}(\mathbf{Q})\tr(\mathbf{B}_{\mathbf{W}})
=
q_c.$
Hence $\Delta_{\mathbf{W}}\ge 0$. The remaining claims follow from Theorem~\ref{thm:joint_interval}.
\end{IEEEproof}

\section{Numerical Results}

This section validates the analytical communication model, illustrates the high-SNR feasibility frontier, and verifies the sensing-side endpoint rule. Unless otherwise stated, $(N_t,K,\tau,\delta,\mathrm{SNR},R_c,R_p)=(8,4,0.20,0.05,15\,\mathrm{dB},0.20,1.00)$, $\sigma_h^2=\sigma_n^2=1$, and $(\epsilon_c^u,\epsilon_c^{\mathrm{sys}},\epsilon_p^u)=(0.10,0.40,0.10)$, so that $\epsilon_c=\min\{\epsilon_c^u,\epsilon_c^{\mathrm{sys}}/K\}=0.10$. All the Monte Carlo results use $2\times 10^5$ channel realizations. For sensing, $L=16$, and the finite-$L$ GLRT estimates are obtained by Monte Carlo over the transmitted snapshots. Since the user channels are isotropic, the communication-side results in Figs.~\ref{fig:baseline_outage}--\ref{fig:highsnr_frontier} depend only on the fact that $\mathbf{w}_c$ is deterministic and unit norm, not on its particular direction. Fig.~\ref{fig:detection_audit} is therefore the only experiment that explicitly tests the common-beam design.

\begin{figure}[b]
    \centering
    \includegraphics[width=0.9\columnwidth]{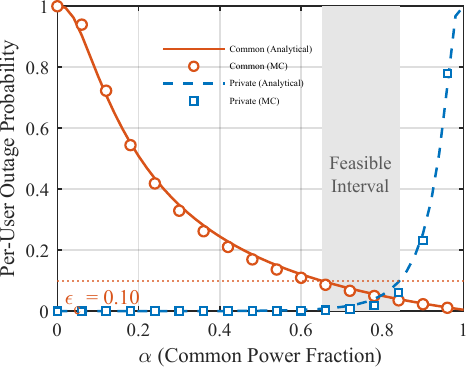}
    \caption{Layer-specific per-user common-stream and private-stream outage probabilities versus the common-power fraction $\alpha$ at the baseline operating point. Curves show the analytical outage expressions, markers show Monte Carlo estimates, and the shaded region indicates $\mathcal{A}_{\mathrm{comm}}^{\mathrm{ana}}(P)$.}
    \label{fig:baseline_outage}
\end{figure}

\begin{figure}[t]
    \centering
    \includegraphics[width=0.9\columnwidth]{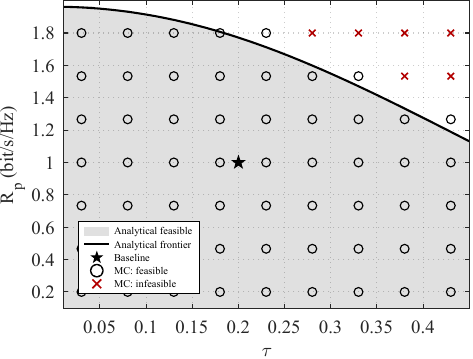}
    \caption{Communication-feasibility screening in the $(\tau,R_p)$ plane. Shading and boundary are determined by $\mathcal{A}_{\mathrm{comm}}^{\mathrm{ana}}(P)$; markers show Monte Carlo certification on the sampled $7\times 9$ grid.}
    \label{fig:frontier_2d}
\end{figure}

Figs.~\ref{fig:baseline_outage} and \ref{fig:frontier_2d} validate the finite-SNR model. In Fig.~\ref{fig:baseline_outage}, the analytical curves track Monte Carlo closely and recovers the interval structure of Theorem~\ref{thm:com_interval}: $\mathcal{A}_{\mathrm{comm}}^{\mathrm{ana}}(P)=[0.6504,\,0.8423]$, versus the Monte Carlo estimate $[0.6241,\,0.8617]$. The mismatch is conservative, producing false-infeasible rather than false-feasible decisions. Fig.~\ref{fig:frontier_2d} shows the same behavior in parameter-space screening: over the sampled grid, the analytical screen returns $54$ true-feasible points, $6$ true-infeasible points, and $3$ conservative false-infeasible points, with no false-feasible cases. Hence, $\mathcal{A}_{\mathrm{comm}}^{\mathrm{ana}}(P)$ is a conservative inner approximation to the Monte Carlo communication-feasible region on the sampled grid.

\begin{figure}[t]
    \centering
    \includegraphics[width=0.9\columnwidth]{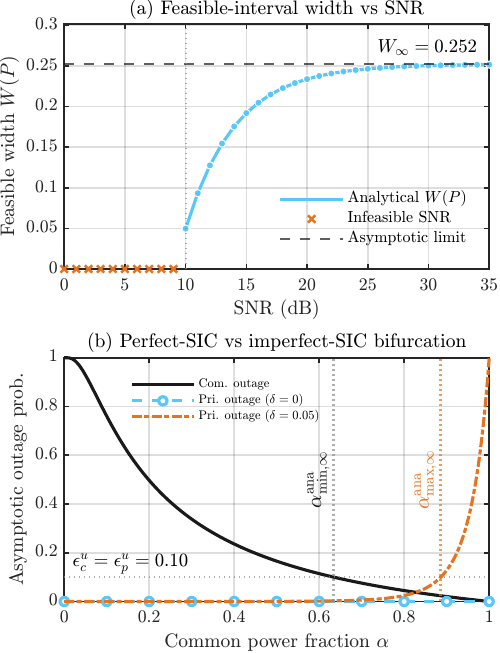}
    \caption{High-SNR analytical communication frontier under the RSMA sequential-decoding reliability constraints at the baseline operating point. (a) Width $[\alpha_{\max}^{\mathrm{ana}}(P)-\alpha_{\min}^{\mathrm{ana}}(P)]_+$ versus SNR. (b) High-SNR common-stream and private-stream outage probabilities under ideal and imperfect SIC.}
    \label{fig:highsnr_frontier}
\end{figure}

Fig.~\ref{fig:highsnr_frontier} shows that the communication obstruction is not merely a low-SNR effect. The feasible interval is empty at low SNR, opens once the private and common constraints become simultaneously satisfiable, and then saturates to a nonzero high-SNR width, validating the floor-ceiling mechanism predicted by Corollary~\ref{cor:high_snr_frontier}. With ideal SIC, this upper bound disappears, confirming that imperfect SIC is the source of the high-SNR private-stream ceiling.

\begin{figure}[t]
    \centering
    \includegraphics[width=0.9\columnwidth]{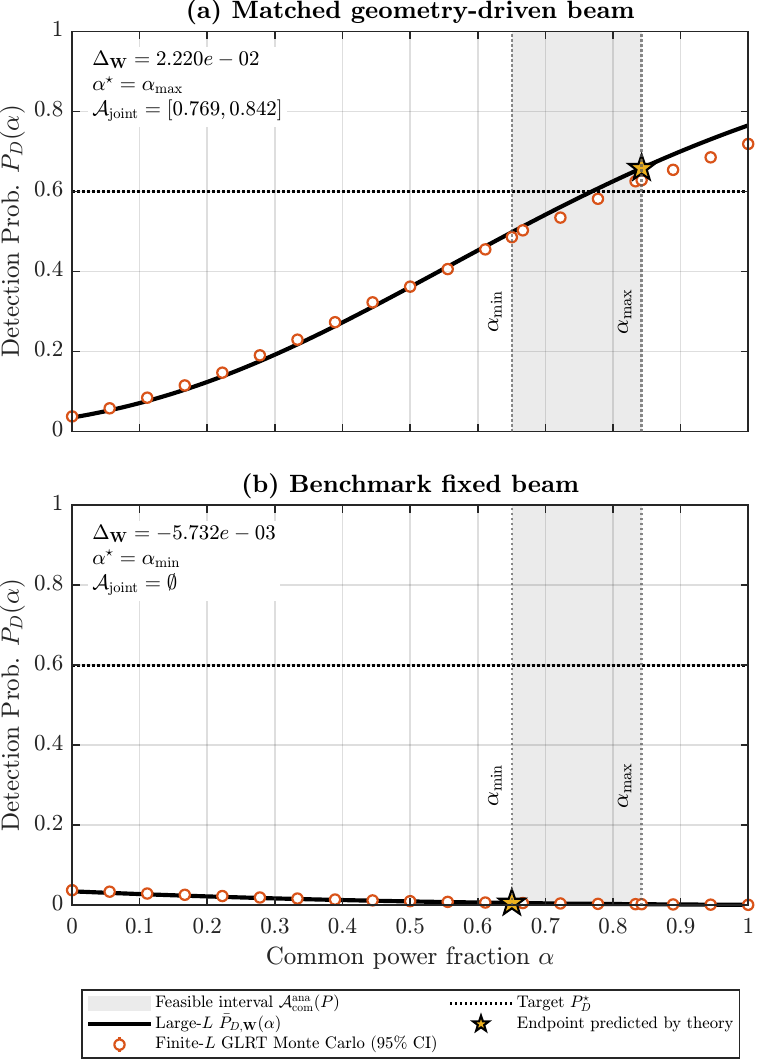}
    \caption{Conditional detection probability over the analytical communication-feasible interval $\mathcal{A}_{\mathrm{comm}}^{\mathrm{ana}}(P)$ at the baseline operating point. Both panels use the same realized private beam set $\mathbf{W}$ and sensing geometry $\mathbf{Q}$. Panel (a) uses the matched beam $\mathbf{w}_c^{\star}=\mathbf{u}_{\max}(\mathbf{Q})$, while panel (b) uses the benchmark fixed beam $\mathbf{w}_c=\mathbf{e}_1$ ($\mathbf{e}_1$ is the first canonical basis vector).}
    \label{fig:detection_audit}
\end{figure}

Fig.~\ref{fig:detection_audit} then tests whether the sensing constraint can be satisfied within the communication-feasible interval. Holding the private beams and sensing geometry fixed; only the common beam is changed. With the matched geometry-driven beam in Fig.~\ref{fig:detection_audit}(a), $\Delta_{\mathbf{W}}>0$, so Corollary~\ref{cor:matched_geometry} selects the upper endpoint $\alpha_{\max}^{\mathrm{ana}}(P)$. That endpoint exceeds the target detection probability, and the finite-$L$ GLRT simulation follows the large-snapshot prediction closely, yielding a nonempty joint-feasible interval near the right edge of $\mathcal{A}_{\mathrm{comm}}^{\mathrm{ana}}(P)$. In contrast, the benchmark fixed beam in Fig.~\ref{fig:detection_audit}(b) gives $\Delta_{\mathbf{W}}<0$, so Theorem~\ref{thm:joint_interval} shifts the best sensing point to the lower endpoint. Even there, the detection probability remains far below the target, and the joint-feasible set is empty. Thus, Fig.~\ref{fig:detection_audit} isolates the main sensing conclusion: for the same communication-feasible interval and private beams, geometry-matched common-beam design makes common-stream-only sensing feasible, whereas an exogenous fixed beam does not.

\section{Conclusion}

This paper addresses a feasibility question in common-stream-only RSMA sensing: when CSIT and SIC are imperfect, can the common stream remain decodable, protect private reliability, and still provide useful sensing energy? The answer shows that the common stream is not automatically a “free” sensing waveform. Our analytical and numerical evaluations reveal a critical structural bottleneck: the combination of imperfect CSIT and SIC enforces a persistent high-SNR "ceiling" on private-stream performance, permanently restricting the maximum common power available for target detection. Furthermore, our derived outage expressions provide a conservative inner approximation of the true communication-feasible interval, enabling robust system screening. The evaluations confirm that utilizing a geometry-matched common beam ensures target detection improves monotonically with the common power fraction. This insight simplifies system design by reducing the complex joint feasibility problem to a single-point certification at the upper communication-feasible boundary. In conclusion, this paper identifies the operational regimes where common-stream-only sensing is viable under non-ideal conditions.

\bibliographystyle{IEEEtran}
\bibliography{references}

\end{document}